\newtheorem{theorem}{Theorem}
\newtheorem{rem}{Remark}
\theoremstyle{definition}
\title{CSGCL: Community-Strength-Enhanced Graph Contrastive Learning}
\author{
Han Chen$^{1,2,*}$
\and
Ziwen Zhao$^{1,*}$\and
Yuhua Li$^{1,\dagger}$\and
Yixiong Zou$^{1}$\and
Ruixuan Li$^{1}$\And
Rui Zhang$^{3,\dagger}$
\affiliations
$^1$School of Computer Science and Technology, Huazhong University of Science and Technology\\
$^2$Institute of Artificial Intelligence, Huazhong University of Science and Technology\\
$^3$www.ruizhang.info\\
\emails
\{HanChenHUST, zwzhao, idcliyuhua, yixiongz, rxli\}@hust.edu.cn,
rayteam@yeah.net
}
\begin{document}

\maketitle
\renewcommand{\thefootnote}{\fnsymbol{footnote}}
\footnotetext[1]{Equal contributors.}
\footnotetext[2]{Corresponding authors.}

\begin{abstract}
    Graph Contrastive Learning (GCL) is an effective way to learn generalized graph representations in a self-supervised manner, 
    and has grown rapidly in recent years. 
    However, the underlying community semantics has not been well explored by most previous GCL methods. 
    Research that attempts to leverage communities in GCL regards them as having the same influence on the graph, 
    leading to extra representation biases. 
    To tackle this issue, we define ``community strength'' to measure the difference of influence among communities. 
    Under this premise, we propose a {\bf C}ommunity-{\bf S}trength-enhanced {\bf G}raph {\bf C}ontrastive {\bf L}earning (CSGCL) framework 
    to preserve community strength throughout the learning process. 
    Firstly, we present two novel graph augmentation methods, Communal Attribute Voting (CAV) and Communal Edge Dropping (CED), 
    where the perturbations of node attributes and edges are guided by community strength. 
    Secondly, we propose a dynamic ``Team-up'' contrastive learning scheme, 
    where community strength is used to progressively fine-tune the contrastive objective. 
    We report extensive experiment results on three downstream tasks: node classification, node clustering, and link prediction. 
    CSGCL achieves state-of-the-art performance compared with other GCL methods, 
    validating that community strength brings effectiveness and generality to graph representations.
    Our code is available at 
    \url{https://github.com/HanChen-HUST/CSGCL}.
\end{abstract}

\section{Introduction}

\begin{figure}[ht]
  \centering
  \includegraphics[scale=0.42]{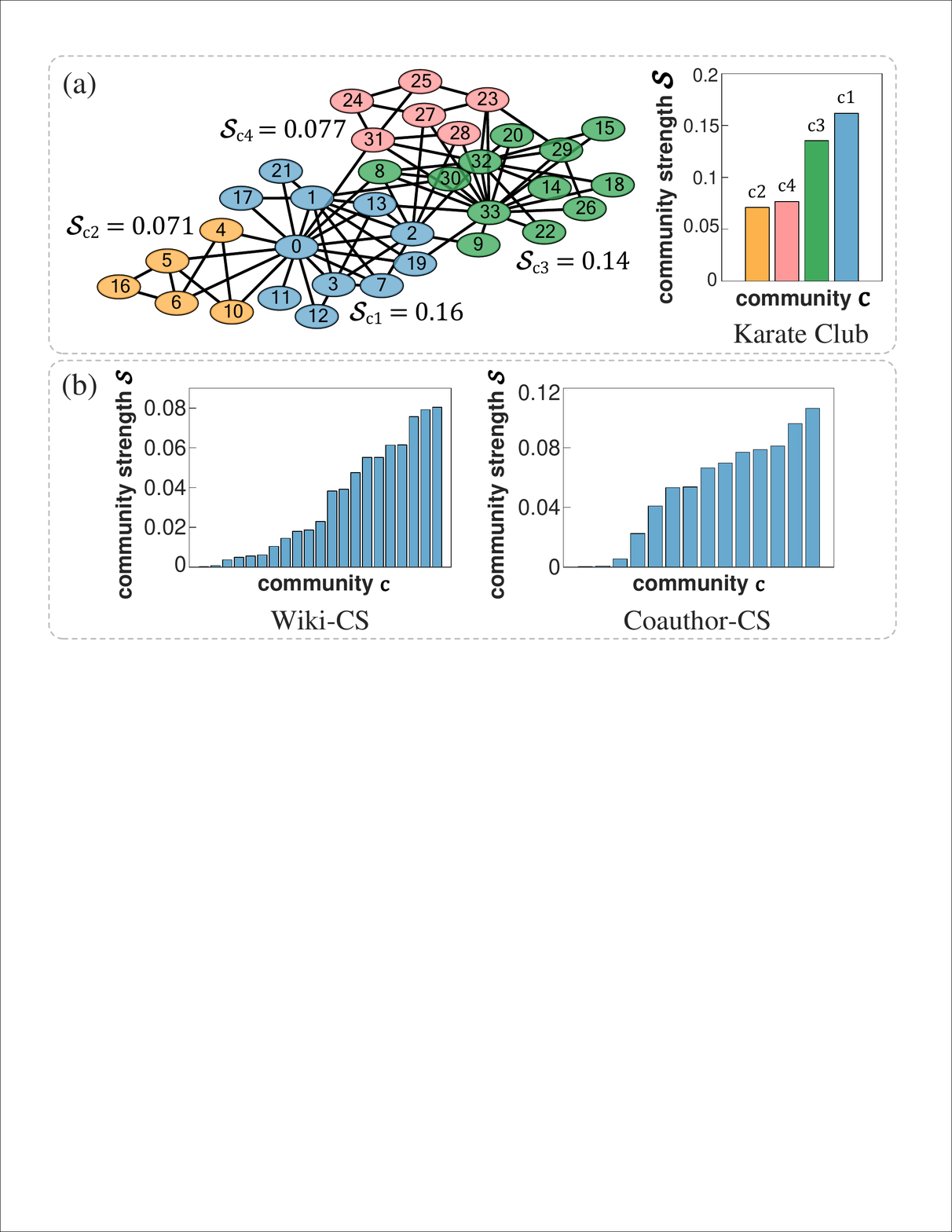}
  \caption{
  % The power of community strength. 
  Community strength varies in real-world networks.
  (a) Community strength on the Karate Club dataset~\protect\cite{karateclub}. 
  Communities, marked in different colors, are subgraphs with dense internal links.
  (b) 
  Community strength distributions of two real-world network examples:
  Wiki-CS~\protect\cite{karateclub} 
  and Coauthor-CS~\protect\cite{amazon+coauthor},
  which suggest that there are differences in community strength in real-world networks.
  }
  \label{fig-cs2}
\end{figure}

Graph Representation Learning (GRL) is extensively used in 
% both industrial and academic fields, such as 
knowledge graphs~\cite{kg}, recommendation~\cite{lightgcn}, e-commerce~\cite{li2020}, biological systems~\cite{cosmig}, etc. 
Unlike the label-dependent and noise-sensitive supervised GRL methods,
% which are also hard to acquire for large-scale network data.
Graph Contrastive Learning (GCL)~\cite{cl} extracts well-generalized representations between paired data views in a self-supervised manner. 
Many promising results from node-level GCL studies have been achieved~\cite{dgi,mvgrl,gca}.
Despite their success, most of these studies hardly consider the community in GCL. 

% Of the multiple levels of graph hierarchy, 
A {\bf community} is a high-order structure of a graph,
characterized by a denser connection among its members~\cite{girvan-newman}.
% Unlike microscopic node representations, small motifs, or randomly partitioned subgraphs, communities are composed by node groups with stronger connectivity between nodes. 
This underlying structure is highly informative:
for citation networks, communities indicate different academic fields; 
for co-purchase networks, communities indicate the common interests of a customer in multiple types of merchandise.
% Many studies have also found that community is closely correlated with various downstream tasks on graphs~\cite{cpne},
Therefore, it is necessary to take into account communities in GCL.
Recently, research like gCooL~\cite{gcool} combines community detection with GCL. 
However, their work is built on an underlying hypothesis
that different communities have the same influence on global semantics,
which may lead to extra representation biases in GCL.

To tackle this issue, we hold that 
{\bf communities influence global semantics differently}, 
which is vital for GCL.
We define ``{\bf community strength}'' to measure this influence of communities,
and give a visualized example of community strength in Figure~\ref{fig-cs2}(a).
Community strength, denoted as $\mathcal{S}$, 
indicates {\bf how dense the connection of a community $c$ is}, 
and reflects the influence of $c$ on the entire graph.
A stricter definition can be found in Section~\ref{3.2}.
``Strong communities'', with more densely connected edges,
have a larger $\mathcal{S}$
({\it e.g.} $\mathcal{S}_{c1}$ and $\mathcal{S}_{c3}$); 
``weak communities'', more sparsely connected ones, 
have a smaller $\mathcal{S}$
({\it e.g.} $\mathcal{S}_{c2}$ and $\mathcal{S}_{c4}$).
% Hence, community strength is closely relative to the graph structure and semantics.
As shown in Figure~\ref{fig-cs2}(b), 
the difference of community strength is one of the inherent features of real-world networks.
Hence, community strength is of practical importance.
We naturally ask:
{\it is there an effective way to leverage community strength information in GCL?}

To answer this question, we propose a novel Community-Strength-enhanced Graph Contrastive Learning (CSGCL) framework. 
It can capture and preserve community strength information throughout the learning process, 
from data augmentation to the training objective.
Firstly, we put forward two novel graph augmentation approaches, Communal Attribute Voting (CAV) and Communal Edge Dropping (CED), 
which apply community strength respectively to every attribute and every edge to preserve the differences among communities.
Secondly, we propose a dynamic ``Team-up'' contrastive scheme for CSGCL
to progressively fine-tune the contrastive objective with community strength. 

\begin{figure}
  \centering
  \includegraphics[scale=0.69]{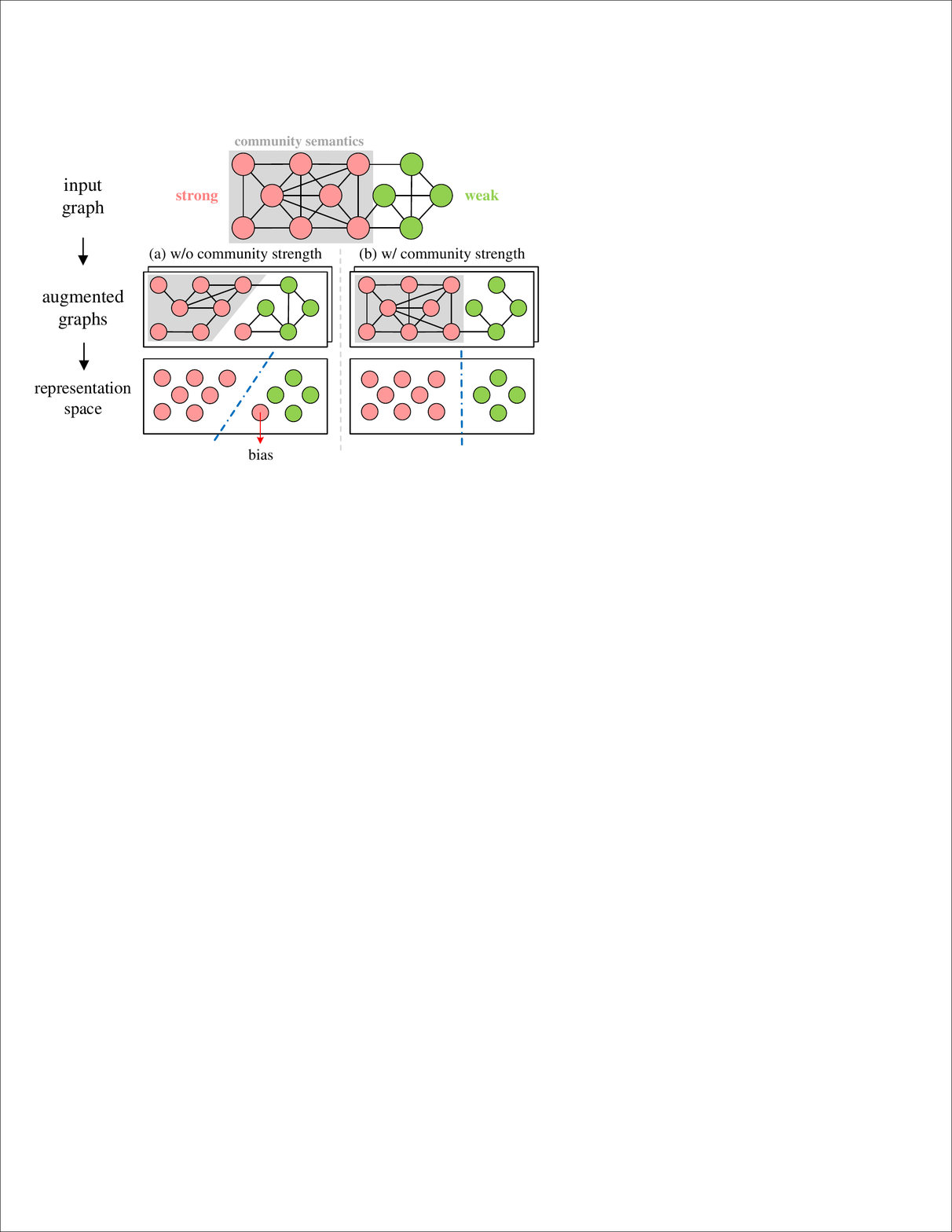}
  \caption{Illustration of the advantage of community strength to GCL.
  (a) Node representations without the guidance of community strength,
  where strong and weak communities are treated equally.
  Therefore, community semantics is changed during the augmentation,
  resulting in a bias in the representation space.  
  (b) Node representations by CSGCL,
  where the bias is handled by preserving community strength information.
  Note that here we use a schematic subgraph example from some dataset with more communities.
  }
  \label{fig-csisgood}
\end{figure}

With the help of community strength, CSGCL can learn more discriminative representations,
as in Figure~\ref{fig-csisgood}.
Without the consideration of community strength, every community is treated equally, 
which is not the case for real-world networks,
% causing the influential vulnerable to the perturbation,
% Thus, strong community semantics showed to the encoder may have already been damaged, 
so it leads to more representation biases.
CSGCL takes into account community strength to handle these biases.

Our main contributions are as follows:

\begin{itemize}
    \item We give a definition of community strength to evaluate the influence of a community. 
    To the best of our knowledge, we are the first to manage to preserve community strength information in GCL. 
    % Different from existing methods, we are the first to integrate community strength into the design of graph augmentations and contrastive objectives.
    % Our method outperforms state-of-the-art significantly on many tasks, indicating the effectiveness and versatility of community strength.
    \item We propose two enhanced graph augmentation methods, CAV and CED, 
    in which the perturbations of node attributes and edges are guided by community strength. 
    % This preserves important community information from the beginning of contrastive learning. 
    \item A dynamic Team-up objective is devised, 
    which directs the training process to preserve community strength.
    % , beneficial community semantics is condensed for downstream use. 
    % This preserves important community information at the end of contrastive learning. 
    \item Extensive experiments on graph benchmark datasets show that 
    CSGCL achieves state-of-the-art performance on up to three downstream tasks -- 
    node classification, node clustering, and link prediction. 
    % which backs our argument of the importance of community strength. 
\end{itemize}

\section{Related Work}

In this section, 
we give a brief review of traditional GRL methods, node-level GCL methods, and GCL with communities. 
Then, we compare CSGCL with these methods. 
For additional related work, see Appendix~\ref{D}.

\subsection{Graph Representation Learning (GRL)}

GRL extracts and compresses the semantic information and topological structure of a graph into low-dimensional representation vectors. 
Traditional methods~\cite{deepwalk,node2vec} employ random walks to generate node sequences 
and embed them into the representation space using text encoding approaches.
% such as Skip-gram~\cite{skipgram}. 
These methods only consider the co-occurrence of nodes. 
As improvements, 
CPNE~\cite{cpne} preserves community information
by Non-negative Matrix Decomposition
and, WGCN~\cite{wgcn} captures weighted structure features.
% Subsequently, more GRL methods that focus on community preservation have emerged~\cite{cde,gne,aneci}. 
An effective GRL paradigm is graph self-supervised learning~\cite{gae+vgae,gssl}, 
% ~\cite{gae+vgae,graphmae,gssl}. 
which creates pretext tasks to leverage the information within data, 
helping the model to get rid of label dependency. 
However, 
% according to our knowledge, 
they do not preserve community strength well in their representations.

\begin{figure*}
  \centering
  \includegraphics[scale=0.92]{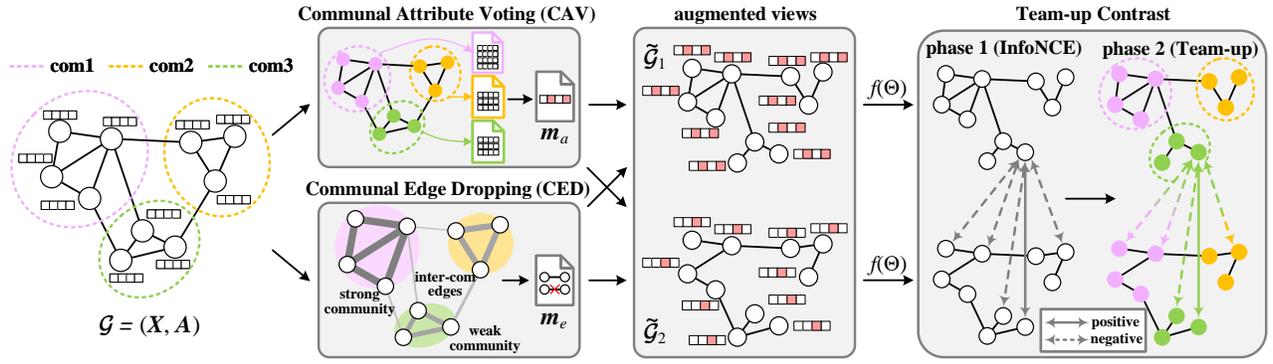}
  \caption{Overview of the proposed CSGCL framework.
  Communities of the input graph $\mathcal{G}$ are marked in three different colors.
  Firstly, two augmented views are generated from $\mathcal{G}$ 
  using two concurrent graph augmentations, CAV and CED:
  CAV tends to remove the most-voted-for attributes,
  where each community submits a ``vote'' (colored file icons) with attribute penalties;
  CED tends to retain edges inside communities, especially the strong ones (thick),
  and drop inter-community edges (thin).
  Then, two augmented views are fed into a shared encoder and projector $f(\Theta)$ to get the representation graphs.
  In the end, the Team-up contrastive scheme (phase 2) progressively fine-tunes the InfoNCE objective (phase 1).
  }
  \label{fig-csgcl}
\end{figure*}

\subsection{Graph Contrastive Learning (GCL)}

Shortly after the proposal of contrastive learning~\cite{dim,simclr}, 
% ~\cite{dim,moco,simclr}
it was introduced into the GRL field and became a new self-supervised hotspot.
Most existing GCL methods~\cite{grace,afgrl,rosa,cgi} only focus on node-level relations. 
GraphCL~\cite{graphcl} introduces augmentations of nodes or node pairs like DropEdge~\cite{dropedge} and feature masking into GCL to increase the difference between two graph views. 
% GRACE~\cite{grace} further amends the contrastive loss by considering all other nodes as negative pairs. 
GCA~\cite{gca} further improves graph augmentation methods to adaptively capture the impact of different nodes. 
% COSTA~\cite{costa} proposes a bias-free augmentation scheme on graph representations by generating discriminative graph sketches. 
% To mitigate the negative effect of augmentations to graph semantics,
% AFGRL~\cite{afgrl} uses the original graph as well as a sampled graph 
% with well-chosen positive samples as contrastive views to learn representations augmentation-freely.
Outstanding as they are, they \textbf{ignore the inherent community information in networks}. 
Other works which consider the global information of graphs~\cite{dgi,mvgrl} 
also pay little attention to community semantics, 
which is discarded or undermined in node-level readout operations. 

gCooL~\cite{gcool} makes a good attempt to combine community information with GCL. 
It partitions communities on the representation graph and contrasts node-level and community-level representations across two different views,
considering node pairs of the same community as positive samples. 
However, \textbf{gCooL does not capture the difference between strong and weak communities}.
% , so it has not done its best to mitigate graphic biases.
The differences between CSGCL and the aforementioned methods are that
(1) CSGCL is a contrastive method taking advantage of community strength information; 
(2) CSGCL preserves community strength from data augmentation to model optimization.
% (3) unlike gCooL, CSGCL obtains community information from input graph to avoid edge perturbation biases.

\begin{table}
\centering
\tabcolsep=0.25cm
\resizebox{0.85\columnwidth}{!}{
\begin{tabular}{cl} \toprule 
Symbol & Description \\ \midrule
$\mathcal{G}, \tilde{\mathcal{G}}$ & original \& perturbed attributed graph \\
$\mathcal{V}$ & node set of graph $\mathcal{G}$ \\
$\mathcal{E}$ & edge set of graph $\mathcal{G}$ \\
$\mathbf{X}, \tilde{\mathbf{X}}$ & original \& perturbed node attribute matrix \\
$\mathbf{A}, \tilde{\mathbf{A}}$ & original \& perturbed adjacency matrix \\
$\mathbf{Y}$ & classification labels of graph $\mathcal{G}$ \\
$\mathbf{Z}$ & representation matrix of graph $\mathcal{G}$ \\
$\mathcal{C}$ & community set of graph $\mathcal{G}$ \\ 
$\mathcal{E}_c$ & edge set of a community $c$ \\
$\mathcal{S}$ & vector of strength $\mathcal{S}_c$ of every community $c$\\
$\mathbf{H}$ & community indicator matrix \\
\midrule\midrule
$\mathbf{M}_{i:}$ & $i$th row of any matrix $\mathbf{M}$\\
$\mathbf{M}_{:j}$ & $j$th column of any matrix $\mathbf{M}$\\
$M_{i, j}$ & element at row $i$ and column $j$ of any matrix $\mathbf{M}$ \\
$\mathbf{M}^{(k)}$ & belonging to the $k$th view of any matrix $\mathbf{M}$, $k=1,2$ \\
$d(\cdot)$ & degree of a node \\
$f(\cdot;\Theta)$ & graph encoder with parameters $\Theta$ \\
$\mathcal{T}(\cdot)$ & data augmentations \\
$\mathbbm{1}_{[ex]}$ & indicator function, $\begin{cases}1, ex \ \text{is true}\\0, ex \ \text{is false}\end{cases}$ \\
$|\cdot|$ & size of a set \\
$\|\cdot\|$ & Euclid norm \\
$\circ$ & Hadamard product \\
\bottomrule
\end{tabular}
}
\caption{Notations.}
\label{table1}
\end{table}

\section{Community-Strength-enhanced Graph Contrastive Learning}

In this section, we illustrate and formulate the details of CSGCL. 
The overall architecture of CSGCL is shown in Figure~\ref{fig-csgcl}. 
Notations used below are summarized in Table~\ref{table1}.
For an intuitive workflow of CSGCL, we summarize our approaches in pseudo-codes in Appendix~\ref{A}.

\subsection{Graph Contrastive Learning}

An undirected attributed graph $\mathcal{G}=(\mathcal{V}, \mathcal{E})$ is composed of a vertex set $\mathcal{V}$ and an edge set $\mathcal{E}$. 
In a graph dataset, a graph with $n$ nodes and $d$ attributes is given as 
a node attribute matrix $\mathbf{X} \in \mathbb{R}^{n\times d}$ 
and a symmetric adjacency matrix $\mathbf{A} \in \{0, 1\}^{n \times n}$, 
denoted as $\mathcal{G}=(\mathbf{X}, \mathbf{A})$.
% The $i$th row of the attribute matrix $\mathbf{X}_{i:}$ is the attributes of the $i$th node.
For nodes $u,v \in \mathcal{V}$, $A_{(u,v)}=1$ iff edge $(u,v) \in \mathcal{E}$, otherwise $A_{(u,v)}=0$.
% $d(v)$ is used to denote the degree of node $v$, 
% which means the total number of edges connected with $v$.

Graph Contrastive Learning aims to maximize the agreement of pairwise graph embeddings~\cite{dim}.
In general, A GCL framework is comprised of a GNN encoder, a projection head, and a contrastive loss. 
At the beginning of training, 
GCL generates two graph views
$(\tilde{\mathcal{G}}_1, \tilde{\mathcal{G}}_2)$
by random perturbation to the input graph.
Such perturbation is usually from multiple data augmentations $\mathcal{T}(\mathcal{G})$, 
{\it e.g.} DropEdge~\cite{dropedge}
%, DropNode~\cite{graphcl}, 
and feature masking~\cite{grace}. 
Then, two views are both fed into a shared GNN encoder (like a two-layer GCN~\cite{gcn}) to obtain node representations 
(but only the encoder is retained for various downstream predictions).
Let $\mathbf{Z}=f(\tilde{\mathcal{G}}; \Theta)$ be the projected embedding of $\tilde{\mathcal{G}}$,
where $f$ is the graph encoder followed by a nonlinear projection layer. 
The embeddings of a certain node $i$ in different views $(\mathbf{Z}^{(1)}_{i:},\mathbf{Z}^{(2)}_{i:})$ are seen as positive pairs, 
and all other pairs (embeddings of different nodes in the same or different views) are seen as negative pairs. 

GCL usually employs noise contrastive estimation (NCE) as the optimization objective.
The most commonly used one is InfoNCE~\cite{cpc},
which enhances the discrimination of representations by pulling similar (positive) ones together and pushing dissimilar (negative) ones away.
The similarity between node pairs is calculated as
\begin{equation}\label{eq1}
s_{ij}^{(1,2)} = sim(\mathbf{Z}^{(1)}_{i:}, \mathbf{Z}^{(2)}_{j:}) / \tau,
\end{equation}
where $sim(\mathbf{z}_i, \mathbf{z}_j) = (\mathbf{z}_i^\top \mathbf{z}_j) / \left(\|{\mathbf{z}_i}\|\|{\mathbf{z}_j}\|\right),$ 
and $\tau$ is the temperature of the similarity. 
Then InfoNCE is formulated as 
\begin{linenomath}
\begin{align}\label{eq2}
    &\mathcal{L} = \mathbb{E}_{(\tilde{\mathcal{G}}_1, \tilde{\mathcal{G}}_2) \sim \mathcal{T}(\mathcal{G})} \nonumber \\
    &\left(\!-\frac{1}{n}\!\sum_{i=1}^{n}{
    \log{\displaystyle\frac{\exp(s_{ii}^{(1,2)})}
    {\sum_{\substack{j=1 \\ j \ne i}}^{n}\exp(s_{ij}^{(1,1)})\!+\!\sum_{j=1}^{n}\exp(s_{ij}^{(1,2)})}}
    }\!\right).
\end{align}
\end{linenomath}

\subsection{{\spaceskip=0.2em\relax  Community-strength-enhanced Augmentations}}\label{3.2}

In this section, we introduce the community-strength-enhanced augmentations:
Communal Attribute Voting and Communal Edge Dropping.
% First, we will give the formal definition of ``community strength''.

\paragraph{Community strength.} 
To begin with, communities in the graph are highlighted by unsupervised community detection methods. 
CSGCL utilizes existing community detectors
so that our study can refocus attention on the communities in graph representations. 
We give more details of community detection methods suitable for CSGCL in Appendix~\ref{D}.

Let $\mathcal{C}$ be the set of all communities in graph $\mathcal{G}$. 
% Community strength is determined by the internal quality of subgraphs, which is a particular partition of a network. 
% Therefore, as a measure of graph partition quality, we introduce modularity~\cite{mod} as follows: 
% \begin{equation}\label{eqmod}
%     \mathcal{M}=\frac{1}{2|\mathcal{E}|}\sum_{c \in \mathcal{C}}\left(2|\mathcal{E}_c|-\frac{(\sum_{v \in c} d(v))^2}{2|\mathcal{E}|}\right)
% \end{equation}
% where $|\mathcal{E}_c|$ denotes the total number of edges in community $c$. 
%$\mathcal{M}$ measures the discrepancy of edge density between intra-community and inter-community node groups~\cite{louvain}. 
%A good partition of communities is able to maximize the graph-level modularity.
%
% Now we consider every entry of the summation in \eqref{eqmod}. 
Then, we have the definition below:
\defn (Community strength $\mathcal{S}$)
For every community $c$ of an undirected attributed graph $\mathcal{G}=(\mathcal{V}, \mathcal{E})$, 
% each of its internal node $v \in c$ has a degree $d(v) > 0$. 
let $|\mathcal{E}|$ and $|\mathcal{E}_c|$ be the number of edges of the entire graph and inside that community respectively. 
Then, the community strength $\mathcal{S}_c$ of community $c$ is formulated as follows:
\begin{equation}\label{eq3}
    \mathcal{S}_c=\frac{|\mathcal{E}_{c}|}{|\mathcal{E}|} - \frac{(\sum_{v \in c} d(v))^2}{4{|\mathcal{E}|}^2}
\end{equation}
which is composed of two terms:
the proportion of edges inside community $c$, 
and the inter-community connection between $c$ and other communities.
% We give the intuitive meaning of $\mathcal{S}_c$ by rearranging eq (3) as (please note the underbraces)
% \begin{equation}\notag
% \mathcal{S}_c = \underbrace{|\mathcal{E}_{c}|/|\mathcal{E}|}_{
% \%\text{ of edges in } c}
% - (\underbrace{(\textstyle\sum_{v \in c}{d(v)})/2|\mathcal{E}_{c}| \cdot |\mathcal{E}_{c}|/|\mathcal{E}| }_{
% \text{inter-community connection density}
% })^2
% \end{equation}
% where the first term gives the proportion of edges inside community $c$, 
% \textit{i.e.} the relative size of $c$; 
% the second term reflects the inter-community connection between $c$ and other communities, 
% so $c$ is relatively dense connected if the second term is small. 
% The square operation is to balance the result of $\mathcal{S}_c$.
% We view a relatively larger community with denser internal connections more influential. 

Community strength is based on modularity~\cite{mod}, a measure of graph partition quality, 
which compares every community with its randomly connected counterpart, 
and calculates the difference in their number of edges. 
% The larger the difference, the more prominent the community is in the entire graph, which meets our need perfectly. 
However, 
community strength is a metric of the local topological structure,
while modularity is a quality metric of the entire graph.
$\mathcal{S} > 0$ for every community,
because $\mathcal{S}_c \le 0$ means $c$ does not have any community characteristics.
% {\it i.e.} its internal edge density is no more than that of a randomly linked subgraph~\cite{mod}.

% ``A dense community has a greater impact'' has been a mild assumption in graphs.
% Intuitively, dense community members have large degrees which are more correlated with other nodes, 
% thus having a greater impact on the overall situation.
% For example, in social networks where social influence can be regarded as the global impact, 
% users with dense connections can spread influence to many other places~\cite{wc}.

\paragraph{Communal Attribute Voting (CAV).} 
Based on the definition above, we propose CAV, 
a graph augmentation method based on attribute masking, 
which adopts community voting for attribute removal:
% Community strength represents the ``loudness'' of each community here:
{\bf attributes which are more influential in strong communities are better preserved, 
whereas less influential attributes are more likely to be voted out.}

As the voting begins, a community penalty $w_a$ is assigned to every single attribute of $\mathbf{X}$. 
Intuitively, $w_a$ reflects the odds to be removed for every attribute candidate:
\begin{equation}\label{eq4}
    \mathbf{w}_a = \tilde{n}_a \left(\text{abs}(\mathbf{X})^\top\mathbf{H}\mathcal{S}\right)
\end{equation}
where $\mathbf{H} \in \{0, 1\}^{n \times |\mathcal{C}|}$ is an indicator matrix 
with $H_{i, c}=\mathbbm{1}_{[v_i\in c]}$ indicating to which community the $i$th node belongs,
and $\tilde{n}_a(x) = (x_{max} - x) / (x_{max} - x_{mean})$ is a one-dimensional normalization operation.
% Each node will vote ``softly'' for its redundant attributes,
Each node will score and vote for its redundant attributes,
and the least-voted-for attributes will win a greater opportunity to be preserved. 

Then, inspired by~\cite{gca}, we adaptively adjust the attribute perturbation distribution using $\mathbf{w}_a$. 
Specifically, the voting results are a group of attribute masks $\mathbf{m}_a$ 
independently sampled from two Bernoulli distributions:
\begin{equation}\label{eq5}
\resizebox{0.91\columnwidth}{!}{$
    \mathbf{m}_{a}^{(1)}\!\sim\!Bernoulli\!\left(1\!-\!\mathbf{w}_a p_{a}^{(1)}\!\right), \
    \mathbf{m}_{a}^{(2)}\!\sim\!Bernoulli\!\left(1\!-\!\mathbf{w}_a p_{a}^{(2)}\!\right)$}
\end{equation}
where $p_{a}^{(1)}$ and $p_{a}^{(2)}$ are two hyperparameters controlling the sampling ranges. 
In this way, community strength is well preserved in the perturbed attribute matrices:
\begin{equation}\label{eq6}
    \tilde{\mathbf{X}}^{(1)} = \mathbf{m}_a^{(1)} \circ \mathbf{X}, \
    \tilde{\mathbf{X}}^{(2)} = \mathbf{m}_a^{(2)} \circ \mathbf{X},
\end{equation}
where $\circ$ is the Hadamard product.

\paragraph{Communal Edge Dropping (CED).} 
The edge is the fundamental structural unit of graphs. 
Evolved from DropEdge~\cite{dropedge}, CED preserves community structures from perturbations guided by community strength.
% It prefers to {\it isolate communities}. 

The rationale for CED is that 
% we expect the perturbed graph to be aware of community strength: 
(1) {\bf intra-community edges are more important than inter-community edges}, and 
(2) {\bf edges in strong communities are more important than those in weak communities}. 
If there is a scoring function $w(e)$ to calculate the weight $w_e$ of each edge $e \in \mathcal{E}_c$ in the community $c$, 
it must meet the following condition:
\begin{equation}\label{eq7}
    w_e = w(e), \ s.t. \ w(e_{strong}) > w(e_{weak}) > w(e_{inter})
\end{equation}
where $e_{strong}$ is an intra-strong-community edge, 
$e_{weak}$ is an intra-weak-community edge with $\mathcal{S}_{strong} > \mathcal{S}_{weak}$, 
and $e_{inter}$ is an inter-community edge.
% Let $\mathbf{A}^{c} \in \{0,1\}^{|c|\times|c|}$ be the adjacency matrix of community $c$, 
% which is a principle submatrix of $\mathbf{A}$. 
% If the edge weight vector $\mathbf{w}_{e} = [w_{uv}]_{|\mathcal{E}|}$ for every $(u,v)\in\mathcal{E}$ is generated from an ideal scoring function $w(\mathcal{S})$,
% \begin{equation}\label{eq7}
%     \mathbf{w}_{e} = w(\mathcal{S}), \ s.t. \ w_{u_i u_j} > w_{v_i v_j} > w_{x_i x_j}, \nonumber
% \end{equation}
% \begin{subnumcases}{}
%     \forall a \in \mathcal{C}, \ \exists (u_i, u_j) \ s.t. \ A^{a}_{i, j}=1 \label{eq7a} \\
%     \forall b \in \mathcal{C}, \mathcal{S}_{a} > \mathcal{S}_{b}, \ \exists (v_i, v_j) \ s.t. \ A^{b}_{i, j}=1 \label{eq7b} \\ 
%     \forall (x_i, x_j), A_{i, j}=1, \ s.t. \ \forall c \in \mathcal{C}, A^{c}_{i, j}=0 \label{eq7c}
% \end{subnumcases}
% where $(u_i, u_j)$ in \eqref{eq7a} is an intra-strong-community edge, 
% $(v_i, v_j)$ in \eqref{eq7b} is an intra-weak-community edge, 
% and $(x_i, x_j)$ in \eqref{eq7c} is an inter-community edge. 

To this end, let $e=(u_i, u_j)$, we formulate CED as:
\begin{linenomath}
\begin{gather}\label{eq8}
    w_{e} = \tilde{n}_e \left(w(e)\right), \\
    \label{eq9}
    w(e) = 
    \begin{cases}
    \mathbf{H}_{i:}\mathcal{S}, 
    \phantom{--}
    (\mathbf{H}\mathbf{H}^\top \circ \mathbf{A})_{i,j}=1 \\
    -\left(\mathbf{H}_{i:}\mathcal{S} + \mathbf{H}_{j:}\mathcal{S}\right), \ \text{otherwise}
    \end{cases}
\end{gather}
\end{linenomath}
where $\tilde{n}_e(x) = (x - x_{min}) / (x_{mean} - x_{min})$ is a one-dimensional normalization operation.

\begin{theorem}
    $\tilde{n}_e \left(w(e)\right)$ defined in \eqref{eq8}-\eqref{eq9} satisfies \eqref{eq7}. 
\label{theorem1}
\end{theorem}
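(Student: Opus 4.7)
The plan is to reduce the statement to the monotonicity of the normalizer $\tilde{n}_e$ plus a direct comparison of the three cases in the piecewise definition of $w(e)$, using the sign guarantee $\mathcal{S}_c > 0$ that was noted just after \eqref{eq3}.

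First I would observe that $\tilde{n}_e(x) = (x - x_{min})/(x_{mean} - x_{min})$ is affine in $x$ with slope $1/(x_{mean} - x_{min}) > 0$ (assuming the edge weights are not all equal, which is the non-degenerate case relevant to training). Hence $\tilde{n}_e$ is strictly increasing, so it suffices to establish
\begin{equation*}
w(e_{strong}) > w(e_{weak}) > w(e_{inter})
\end{equation*}
for the unnormalized scores, and then apply $\tilde{n}_e$ to both inequalities.

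Next I would unpack the piecewise definition \eqref{eq9}. The matrix condition $(\mathbf{H}\mathbf{H}^\top \circ \mathbf{A})_{i,j} = 1$ encodes that the edge $e=(u_i,u_j)$ exists and that $u_i,u_j$ belong to the same community. In that case, since $\mathbf{H}_{i:}$ is a one-hot indicator selecting the community $c$ containing $u_i$, we have $\mathbf{H}_{i:}\mathcal{S} = \mathcal{S}_c$. Thus for an intra-community edge $e$ in community $c$, $w(e) = \mathcal{S}_c$, giving $w(e_{strong}) = \mathcal{S}_{strong}$ and $w(e_{weak}) = \mathcal{S}_{weak}$; by hypothesis $\mathcal{S}_{strong} > \mathcal{S}_{weak}$, so the first inequality is immediate. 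For an inter-community edge $e = (u_i,u_j)$ with $u_i \in c_i$ and $u_j \in c_j$, $c_i \neq c_j$, the second branch gives $w(e_{inter}) = -(\mathcal{S}_{c_i} + \mathcal{S}_{c_j})$. Invoking $\mathcal{S}_c > 0$ for every community yields $w(e_{inter}) < 0 < \mathcal{S}_{weak} = w(e_{weak})$, closing the second inequality.

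Combining the two inequalities produces the ordering in \eqref{eq7} for $w(e)$, and applying the strictly increasing map $\tilde{n}_e$ preserves it, finishing the proof. I do not expect any real obstacle here; the only mild subtlety is spelling out why $\mathbf{H}_{i:}\mathcal{S}$ reduces to $\mathcal{S}_c$ in the intra-community branch, and silently invoking the positivity $\mathcal{S}_c > 0$ declared after Definition 1, which does all of the work in separating the intra-community from the inter-community case.
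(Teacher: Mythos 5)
Your proposal is correct and follows essentially the same route as the paper's proof: case-split on the piecewise definition of $w(e)$, reduce the intra-community branch to $\mathcal{S}_c$ via the one-hot structure of $\mathbf{H}_{i:}$, invoke $\mathcal{S}_c>0$ to handle the inter-community branch, and finish by noting that the affine normalizer $\tilde{n}_e$ with $x_{mean}>x_{min}$ preserves the ordering. The only cosmetic difference is that you pass through $0$ directly ($w(e_{inter})<0<\mathcal{S}_{weak}$) whereas the paper passes through $\min_{c}\mathcal{S}_c$; both rest on the same positivity fact and yield the same chain of inequalities.
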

\begin{proof}
See Appendix~\ref{B}.
\end{proof}
\begin{rem}
Theorem 1 indicates that CED is an ideal graph augmentation on edges:
(1) strong communities have larger edge weights and vice versa; 
(2) inter-community edges have the smallest weights, which are much more likely to be dropped. 
Such design aims to retain real-world community properties in the perturbed graphs as much as possible. 
\end{rem}

Then, similar to CAV, the edge weight vector $\mathbf{w}_e$ is used to adjust the edge perturbation distribution: 
\begin{equation}\label{eq10}
\resizebox{0.89\columnwidth}{!}{$
    \mathbf{m}_{e}^{(1)}\!\sim\!Bernoulli \!\left(\mathbf{w}_e p_{e}^{(1)}\!\right),
    \mathbf{m}_{e}^{(2)}\!\sim\!Bernoulli \!\left(\mathbf{w}_e p_{e}^{(2)}\!\right)
$}
\end{equation}
where $p_{e}^{(1)}$ and $p_{e}^{(2)}$ function like $p_{a}^{(1)}$ and $p_{a}^{(2)}$. 
In this way, community strength is well preserved in the perturbed adjacency matrices:
\begin{equation}\label{eq11}
    \tilde{\mathbf{A}}^{(1)} = \left[
    m_{e, (u, v)}^{(1)} A_{(u, v)}^{(1)}
    \right], \ 
    \tilde{\mathbf{A}}^{(2)} = \left[
    m_{e, (u, v)}^{(2)} A_{(u, v)}^{(2)}
    \right]
\end{equation}

Up to now, two augmented graphs $(\tilde{\mathcal{G}}^{(1)}, \tilde{\mathcal{G}}^{(2)})$ 
have been generated by CSGCL to get embeddings.

\subsection{Team-up Contrastive Learning Scheme}

After obtaining the embeddings of two augmented graph views 
$\mathbf{Z}^{(1)}=f(\tilde{\mathcal{G}}^{(1)};\Theta),\mathbf{Z}^{(2)}=f(\tilde{\mathcal{G}}^{(2)};\Theta)$,
we find it necessary to teach our encoder the differences among communities, 
guided by a community-strength-enhanced objective. 

GRL with communities is analogous to real-world social group activities. 
On such occasions, people tend to team up with those with whom they are familiar. 
A group of strangers need to communicate with one another to seek common interests before they coalesce~\cite{p2pc}.
Going back to the GRL scenario:
at the beginning of training, nodes are unfamiliar with each other and crave for interaction with anyone within reach. 
% During this period, underscoring community structures is not helping, but imposing limits on the message passing process. 
As the training goes on, sufficient information exchange between nodes makes them understand their partners better. 
Driven by common interests, they tend to gather into groups -- 
at this moment, the guidance of communities is indispensable for model optimization. 

\begin{table*}[ht]
\centering
\captionsetup{justification=centering}
\resizebox{1.5\columnwidth}{!}{
\begin{tabular}{lcccccccc} \toprule 
Method & Training data & Level & Wiki-CS & Computers & Photo & Coauthor-CS  \\ \midrule
Raw (LogReg) & $\mathbf{X}$ & -- & 71.85±0.00 & 73.25±0.00 & 79.02±0.00 & 89.64±0.00\\ 
DeepWalk (w/o $\mathbf{X}$) & $\mathbf{A}$ & node & 73.84±0.16 & 85.77±0.58 & 89.06±0.43 & 84.71±0.35 \\ 
node2vec & $\mathbf{A}$ & node & 75.52±0.17 & 86.19±0.26 & 88.86±0.43 & 86.27±0.22\\ 
DeepWalk (w/ $\mathbf{X}$) & $\mathbf{X},\mathbf{A}$ & node & 77.21±0.03 & 86.28±0.07 & 90.05±0.08 & 87.70±0.04 \\ 
CPNE & $\mathbf{A}$ & community & 65.53±0.58 & 73.66±0.83 & 82.39±1.23 & OOM \\
\midrule\midrule
GAE & $\mathbf{X},\mathbf{A}$ & node & 70.15±0.01 & 85.27±0.19 & 91.62±0.13 & 90.01±0.71\\
VGAE & $\mathbf{X},\mathbf{A}$ & node & 75.63±0.19 & 86.37±0.21 & 92.20±0.11 & 92.11±0.09\\
DGI & $\mathbf{X},\mathbf{A}$ & node & 75.35±0.14 & 83.95±0.47 & 91.61±0.22 & 92.15±0.63\\
MVGRL & $\mathbf{X},\mathbf{A}$ & node & 77.52±0.08 & 87.52±0.11 & 91.74±0.07 & 92.11±0.12\\
GRACE & $\mathbf{X},\mathbf{A}$ & node & 77.68±0.34 & 88.29±0.11 & 92.52±0.34 & 92.50±0.08\\
GCA-best & $\mathbf{X},\mathbf{A}$ & node & 78.20±0.04 & 87.99±0.13 & 92.06±0.27 & 92.81±0.19\\
% GCA-best* & $\mathbf{X},\mathbf{A}$ & node & 78.35±0.05 & 87.85±0.31 & 92.53±0.16 & 93.10±0.01\\
AFGRL & $\mathbf{X},\mathbf{A}$ & node & 77.62±0.49 & 89.88±0.33 & 93.22±0.28 & 93.27±0.17\\
gCooL-best & $\mathbf{X},\mathbf{A}$ & community & 78.20±0.09 & 88.67±0.10 & 92.84±0.20 & 92.75±0.01\\
%COSTA & $\mathbf{X},\mathbf{A}$ & 79.12±0.02 & 88.32±0.03 & 92.56±0.45 & 92.94±0.10\\
%ProGCL & $\mathbf{X},\mathbf{A}$ & 78.45±0.04 & 89.55±0.16 & 93.64±0.13 & 93.67±0.12\\
CSGCL (ours) & $\mathbf{X},\mathbf{A}$ & community & \textbf{78.60±0.13} & \textbf{90.17±0.17} & \textbf{93.32±0.21} & \textbf{93.55±0.12}\\
\midrule\midrule
GCN & $\mathbf{X},\mathbf{A},\mathbf{Y}$ & -- & 78.02±0.51 & 87.79±0.36 & 91.82±0.01 & 93.06±0.00\\
GAT & $\mathbf{X},\mathbf{A},\mathbf{Y}$ & -- & 77.62±0.69 & 88.64±0.63 & 92.16±0.47 & 91.49±0.30\\ \bottomrule
\end{tabular}
}
\caption{Node classification results measured by accuracy (\%). 
``OOM'' stands for Out-Of-Memory on an 11GB GPU.
}
\label{cls}
\end{table*}

\begin{table*}[ht]
\centering
\begin{minipage}{0.95\columnwidth}
\captionsetup{justification=centering}
\tabcolsep=0.15cm
\resizebox{1\columnwidth}{!}{
\begin{tabular}{lcccc} \toprule
Method & Wiki-CS & Computers & Photo & Coauthor-CS  \\ \midrule
Raw (K-means) & 18.22±0.00 & 16.59±0.00 & 28.22±0.00 & 64.18±0.00  \\ 
CPNE & 32.44±1.30 & 42.51±1.60 & 53.62±1.58 & OOM \\ 
\hline \hline 
DGI & 31.00±0.02 & 31.80±0.02 & 37.60±0.03 & 74.70±0.01\\
MVGRL & 26.30±1.00 & 24.40±0.00 & 34.40±4.00 & 74.00±1.00\\
GRACE & 28.68±1.18 & 38.97±0.91& 47.70±4.28 & 73.79±0.58\\
GCA-best & 30.22±1.09 & 39.09±0.55 & 51.37±4.15 & 74.38±0.42\\
gCooL-best & 30.92±1.12 & 42.70±0.96 & 58.50±2.98 & 71.42±1.16\\
CSGCL (ours) & \textbf{32.80±1.50} & \textbf{45.09±0.95} & \textbf{58.79±2.17} & \textbf{78.29±1.24}\\ \bottomrule
\end{tabular}
}
\caption{
Node clustering results measured by NMI (\%).
}
\label{clu}
\end{minipage}
\quad\quad
\begin{minipage}{0.82\columnwidth}
\tabcolsep=0.15cm
\resizebox{1\columnwidth}{!}{
\begin{tabular}{lccc} \toprule 
Method & Computers & Photo & Coauthor-CS  \\ \midrule
Raw & 54.58±0.00 & 60.21±0.00 & 93.69±0.00\\
GRACE & 89.97±0.25& 88.64±1.17 & 87.67±0.10\\
GCA-best & 90.67±0.30 & 89.61±1.46 & 88.05±0.00\\
gCooL-best & 90.45±0.86 & 90.83±2.05& 88.91±0.04\\
CSGCL (ours) & \textbf{94.95±1.70} & \textbf{91.63±1.37} & \textbf{96.45±0.15}\\ \midrule\midrule
GCN & 87.89±0.90 & 88.20±0.08 & 92.71±0.63\\
GAT & 87.60±2.23 & 89.32±2.06 & 92.80±0.75\\ \bottomrule
\end{tabular}
}
\caption{
Link prediction results measured by AUC (\%).
}
\label{link}
\end{minipage}
\end{table*}

\begin{figure*}[!h]
  \centering
  \includegraphics[scale=0.23]{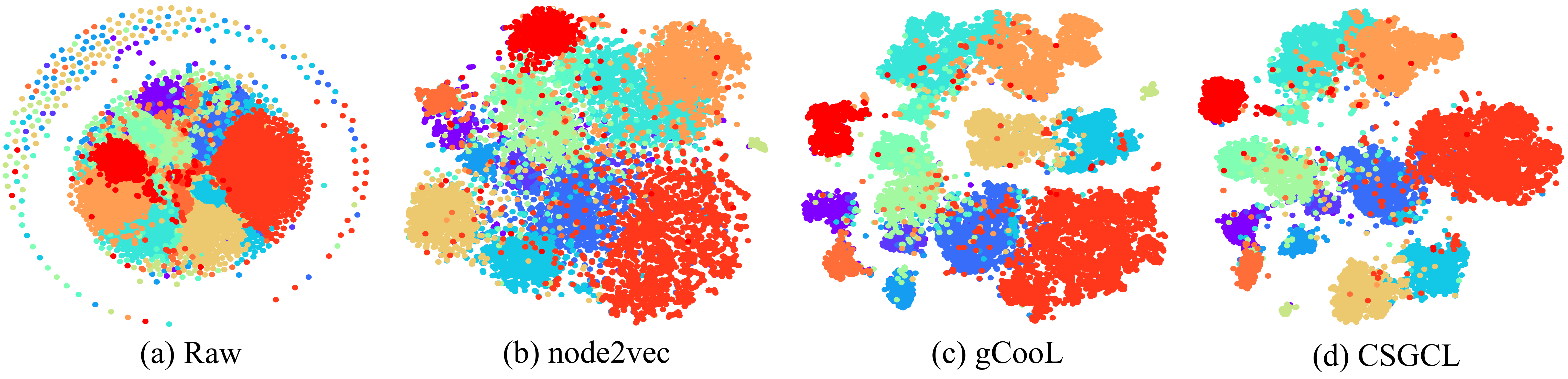}
  \caption{t-SNE visualization of representations on Coauthor-CS.}
  \label{fig-tsne}
\end{figure*}

Inspired by this, we propose the {\bf dynamic Team-up contrastive learning scheme}
to learn communities more effectively. 
We divide CSGCL's learning process into two phases:

\paragraph{Individual contrast.} 
In the first phase, 
the model optimizes InfoNCE
to learn initial similarities between each node individuals. 
The form of InfoNCE is shown in~\eqref{eq2}.
% This period is literally identical with the normal contrastive training procedure. 
% It is called ``individual contrast'' in correspondence with the Team-up contrast below. 

\paragraph{Team-up contrast.} 
In the second phase, every similarity
term $s$ of InfoNCE is fine-tuned by community strength: 
\begin{linenomath}
\begin{align}\label{eq12}
    &\mathcal{L} = 
    \mathbb{E}_{(\tilde{\mathbf{Z}}^{(1)}, \tilde{\mathbf{Z}}^{(2)})} \nonumber \\
    &\left(\!-\frac{1}{n}\!\sum_{i=1}^{n}{
    \log\!{\displaystyle\frac{\exp(\tilde{s}_{ii}^{(1,2)})}
    {\sum_{\substack{j=1 \\j \ne i}}^{n}\exp(\tilde{s}_{ij}^{(1,1)})\!+\!\sum_{j=1}^{n}\exp(\tilde{s}_{ij}^{(1,2)})}}
    }\!\right)
\end{align}
\end{linenomath}
where
\begin{equation}\label{eq13}
    \tilde{s}_{ij}^{(1,2)}=s_{ij}^{(1,2)}+\gamma(\mathbf{H}_{i:}+\mathbf{H}_{j:})\mathcal{S}.
\end{equation}
and $\mathbf{H}_{i:}$ is the community membership for the $i$th node.
The latter term in \eqref{eq13} refers to the strength of ``teams'' (communities) to which the $i$th and $j$th nodes belong,
and $\gamma$ is a coefficient of community strength. 
% A simple weighted sum will do the trick, 
% but the value of $\gamma$ will affect the model's judgment of the influence of each community. 

The final loss is a piecewise function combining the two phases above,
with a dynamic change of the fine-tuned similarity $\tilde{s}$ in \eqref{eq13}:
\begin{equation}\label{eq14}
    \tilde{s}_{ij}^{(1,2)}=s_{ij}^{(1,2)}+ \gamma(t)(\mathbf{H}_{i:}+\mathbf{H}_{j:})\mathcal{S}
\end{equation}
in which $\gamma(t)$ is a monotonically non-decreasing function that varies with training time $t$ (in the units of 100 epochs). 
We simply consider a hard Sigmoid-shaped form for $\gamma(t)$:
\begin{equation}\label{eq15}
    \gamma(t; t_0, \gamma_{max}) = \min\left\{\max\left\{0,t - t_0\right\}, \gamma_{max}\right\}
\end{equation}
where $t_0$ is the demarcation point of two phases. Thus we can unify two phases and formulate the final loss as \eqref{eq12}\eqref{eq14}\eqref{eq15}:
during the individual contrast ($t \le t_0$, $\gamma=0$), 
% free ``ice-breaking communications'' are allowed between node individuals. 
free communications are allowed between node individuals;
during the Team-up contrast ($t_0 < t < t_0 + \gamma_{max}$)
when the node-level relationships are well-learned,
$\gamma$ rises gradually to direct the model to notice community strength;
when teaming-up is complete ($t \ge t_0 + \gamma_{max}$), 
we set $\gamma\equiv\gamma_{max}$
to prevent deviation from the contrastive objective.

\section{Experiments}

In this section, we describe our experiments that were conducted to evaluate our model and answer the questions below:

\begin{itemize}  
\item Does GCL really benefit from our proposed methods? (Section~\ref{4.2}). 

\item Is the boost to performance really given by community strength? (Section~\ref{4.3}).
\item How does the Team-up strength coefficient influence the performance on a certain graph? (Section~\ref{4.4}).

\end{itemize}

Detailed experiment configurations can be found in Appendix~\ref{C}.
Additional experiment results using different community detectors and metrics (micro- \& macro-F1 and average precision)
can be found in Appendix~\ref{E}.

\subsection{Experiment Setup}\label{4.1}

\paragraph{Datasets.}
We use four benchmark graphs in different fields,
including one directed graph: Wiki-CS;
and three undirected graphs: Amazon-Computers (Computers), Amazon-Photo (Photo), and Coauthor-CS.
We convert Wiki-CS to an undirected graph only during the community detection process by adding reverse edges.
There is no other difference between Wiki-CS and undirected graphs for model training.
% See Appendix~\ref{C.1}. for more statistics.

\paragraph{Baselines.}
We divide all baseline models into the following three categories: 
\begin{itemize}
\item Traditional unsupervised models: 
Logistic regression (LogReg) \& K-means~\cite{kmeans} with raw features, DeepWalk~\cite{deepwalk} w/ or w/o the use of node attributes, 
node2vec~\cite{node2vec},
and CPNE~\cite{cpne}; 
\item Supervised graph neural network models: GCN~\cite{gcn} and GAT~\cite{gat};
\item Renowned and up-to-date self-supervised GRL models:
GAE \& VGAE~\cite{gae+vgae}, DGI~\cite{dgi}, 
MVGRL~\cite{mvgrl}, GRACE~\cite{grace}, 
GCA~\cite{gca}, AFGRL~\cite{afgrl}, and gCooL~\cite{gcool}. 
\end{itemize}
For GCA and gCooL which have multiple model configurations, 
we select the best one for each dataset,
marked as ``-best'' in the statistics.

% \paragraph{Implementation details.}
% See Appendix~\ref{C.2}.

\paragraph{Evaluation protocol.}
% For each experiment, the contrastive learning framework is trained unsupervisedly.
% Then, the representations are sent to different downstream branches. 
For node classification, we follow the evaluation protocol of~\cite{gca}
which trains and tests an $\ell_2$-regularized logistic regression classifier
with 10 random data splits (20 fixed splits for Wiki-CS). 
For node clustering, a K-means model~\cite{kmeans} with fixed initial clusters is fit.
For link prediction, the cosine similarity between every two nodes is calculated to evaluate the existence of a link.
Each experiment is repeated 10 times to report the average performance along with the standard deviation. 

\paragraph{Metrics.}
We use accuracy for node classification, 
normalized mutual information (NMI)~\cite{nmi} for node clustering, 
and area under the curve (AUC)~\cite{auc} for link prediction.
% The larger the values, the better the model's performance.

\paragraph{Implementation details.}
We choose Leiden~\cite{leiden} as the community detector for CSGCL,
before which we pretested the community detection methods detailed in Appendix~\ref{E.1}.
To ensure fairness,
A simple two-layer GCN is employed to CSGCL as well as all other contrastive baselines. 
We use the Adam optimizer to optimize the model. 
Detailed environment configurations can be found in Appendix~\ref{C.2}.

\begin{figure*}
  \centering
  \includegraphics[scale=0.84]{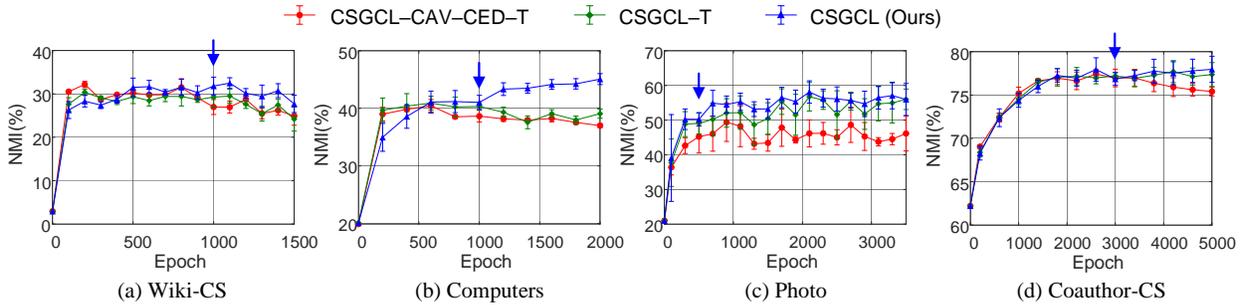}
  \caption{NMI (\%) of community detection in the ablation configurations.
  % Left to right: Wiki-CS, Computers, Photo, Coauthor-CS;
  ``\textcolor{blue}{$\downarrow$}'' points out $t_0$, the demarcation point of Individual and Team-up phases,
  which varies for different datasets.
  }
  \label{fig-ablation}
\end{figure*}

% \subsection{Competency and generality of CSGCL}\label{4.2}
\subsection{Overall Performance}\label{4.2}
In this section, we discuss the overall performance of CSGCL. 
CSGCL is adapted to three downstream tasks: 
node classification and node clustering on all datasets (Table~\ref{cls} and~\ref{clu}), and link prediction on three undirected datasets (Table~\ref{link}).
% Our undirected link prediction is not performed on Wiki-CS
% since it is not able to extract rich information from directed graphs~\cite{lpsurvey}.
Some statistics are borrowed from either their original papers or~\cite{gca,gcool}.
Data components used by each method during training are shown in the ``Training data'' column of Table~\ref{cls},
including node attributes $\mathbf{X}$, the adjacency matrix $\mathbf{A}$, and labels $\mathbf{Y}$.
The representation level of each unsupervised method is listed in the ``Level'' column,
including node-level and community-level.
{\bf Bolded} results in Tables~\ref{cls}--\ref{ablation} below represent the best results for each column.

We can see that 
CSGCL is superior to not only the traditional baselines 
but also the latest GCL models in all three downstream tasks. 
This is most noticeable in node clustering, 
where the NMI score of CSGCL is 1.7\% higher on average than the next-best methods; 
for link prediction, 
CSGCL has a 7.5\% increase of AUC on Coauthor-CS 
compared with other contrastive methods. 
CSGCL is also observed to be competitive against fully supervised models,
especially on link prediction.
% because supervised models are dependent on classification labels 
% that can not be applied directly to link prediction~\cite{sail}.
% Since CSGCL utilizes an undirected community detection algorithm,
% community semantics underlying in the directed graph, such as Wiki-CS, is damaged~\cite{dircd}.
% However, for link prediction on Wiki-CS, 
% CSGCL is still competitive with GCA and gCooL
% and outperforms other baselines.
Such achievements reveal the great generality of graph representations of CSGCL:
{\bf community strength is beneficial to multiple downstream tasks on graphs.}

Note that gCooL -- another GCL method with communities -- also achieves great performance on these tasks, 
which backs up the benefit of community semantics for GCL. 
However, 
% gCooL does not preserve community information throughout the learning process, 
% so their community information is perturbed during augmentation. 
gCooL does not capture the differences among communities. 
% either.
We conduct the one-tailed Wilcoxon signed-rank test~\cite{wilcoxon} to further back up our improvement:
taking node classification as an example, we have $p=9.77e$-$4<0.05$ on all datasets,
indicating the acceptance of alternative hypothesis that CSGCL has significant improvements over the best model of gCooL.

% Therefore, the accuracy of their best model 
% % (w/ exponential cosine similarity or Gaussian RBF similarity) 
% is about 0.5\%-1.5\% lower than ours. 
% % is 1.5\% less than ours on Computers.

Furthermore, we visualize the graph representations of CSGCL as well as baselines by t-SNE~\cite{tsne},
a dimension reduction and visualization method.
% to highlight inherent similarity among node samples. 
As shown in Figure~\ref{fig-tsne}, 
CSGCL learns the most discriminative graph representations among the competitive methods.

% \subsection{Validity of community-strength-enhancement}\label{4.3}
\subsection{Ablation Studies}\label{4.3}
In this section, we describe the ablation studies on the key modules of CSGCL. 
``--CAV'' and ``--CED'' refers to uniform attribute masking and edge dropping respectively,
where the perturbation probabilities are set to the same for all attributes and edges.
``--T'' refers to the InfoNCE objective instead of Team-up.
``--$\mathcal{S}$'' refers to the disregard of community strength,
where every community shares the average strength $\mathcal{S}$.

\begin{table}
\centering
\captionsetup{justification=centering}
\tabcolsep=0.08cm
\resizebox{1\columnwidth}{!}{
\begin{tabular}{lcccc} 
\toprule
Variant & Wiki-CS & Computers & Photo & Coauthor-CS  \\ \midrule
CSGCL--CAV--CED--T & 77.68±0.34 & 88.29±0.11 & 92.52±0.34 & 92.50±0.08 \\ 
CSGCL--CED--T & 77.78±0.07 & 89.94±0.48 & 93.08±0.28 & 93.35±0.13 \\ 
CSGCL--CAV--T & 78.43±0.11 & 90.06±0.19 & 93.11±0.34 & 93.09±0.11 \\ 
CSGCL--T & 78.51±0.11 & 90.12±0.23 & 93.26±0.23 & 93.50±0.09 \\
CSGCL-$\mathcal{S}$ & 77.88±0.15 & 89.52±0.26 & 92.85±0.34 & 93.41±0.07 \\
CSGCL (Ours) & \textbf{78.60±0.13} & \textbf{90.17±0.17} & \textbf{93.32±0.21} & \textbf{93.55±0.12} \\ \bottomrule
\end{tabular}}
\caption{Ablation study on node classification.}
\label{ablation}
\end{table}

Table~\ref{ablation} verifies the effectiveness of CSGCL.
The classification accuracy over all 4 datasets increases with either CAV or CED, 
and gains over 1\% improvement on average with both.
CSGCL with Team-up objective reaches the best classification performance,
bringing significant improvements over CSGCL--T on three datasets with the results $p=1.37e$-$2$, $2.78e$-$1$, $3.71e$-$2$, and $1.37e$-$2$ of Wilcoxon signed-rank test in order. 
More importantly, the accuracy degrades without variations of community strength $\mathcal{S}$.
This verifies that {\bf the difference among community strength cannot be ignored}.

We also show that CSGCL has well preserved community semantics 
by carrying out community detection on the representations.
Results are shown in Figure~\ref{fig-ablation}.
% We set different limits of training epochs for each dataset to ensure the results are not overfitted.
The green curves (w/ CAV and CED) perform better than the red curves (w/ uniform attribute masking and edge dropping) 
after a period of training time, especially on Photo and Coauthor-CS;
CSGCL, the blue curves using the Team-up objective, performs the best among its counterparts, 
with the performance gaps widening in the Team-up phase, 
especially on Computers and Photo. 
This shows that our community-strength-enhancement has better preserved community semantics throughout the learning process,
which results in more accurate graph representations.

% \subsection{Sensitivity of the strength coefficient}\label{4.4}
\subsection{Parameter Analysis}\label{4.4}

\begin{figure}
  \centering
  \includegraphics[scale=0.34]{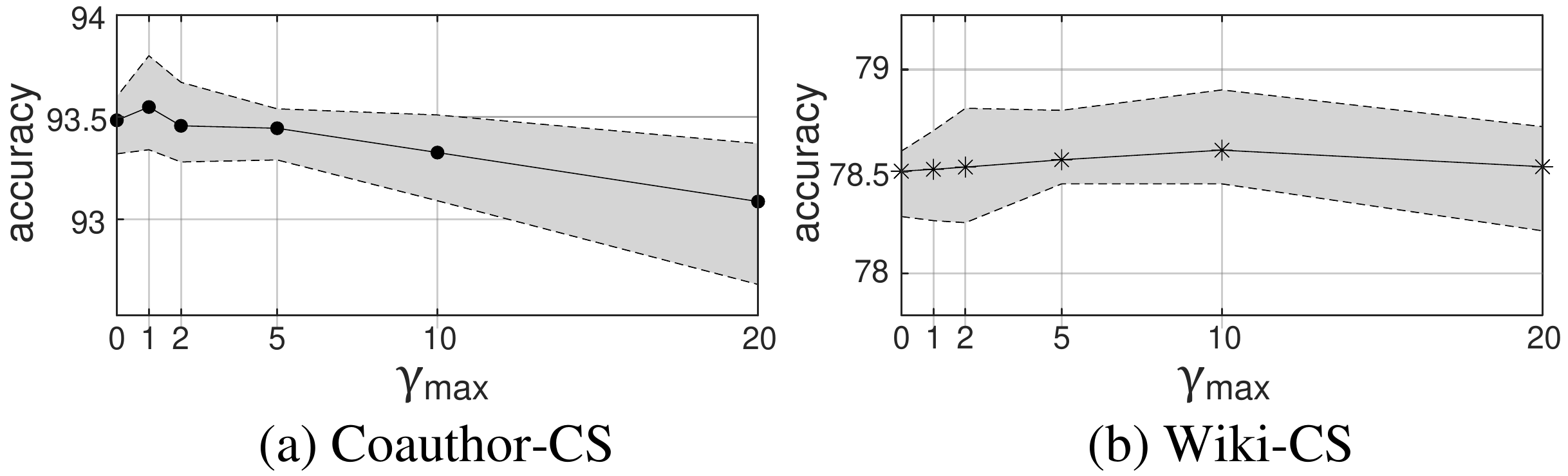}
  \caption{Node classification results with varied $\gamma_{max}$. 
  The average accuracy scores are dotted,
  and the shaded area is the error band. 
  % with broken lines delineating the upper \& lower bounds. 
  }
  \label{fig-loss}
\end{figure}

In this section, we describe the parameter analysis on the upper bound of the strength coefficient,
$\gamma_{max}$,
% Because of the progressive transition of the two phases,
% the variation of the demarcation point $t_0$ is not able to clearly show the effect of community strength.
% Therefore, we mainly discuss the effect of $\gamma_{max}$,
which controls the overall influence of communities on a certain graph.
All experiments below use the same training epochs and configurations.
$\gamma_{max}$ is set as 0, 1, 2, 5, 10, and 20.
% to enhance the readability of the experiment results.
We take Coauthor-CS and Wiki-CS as examples.

% Although the community semantics has different meanings for the two datasets,
As the results in Figure~\ref{fig-loss}, 
the accuracy of node classification is relatively stable when $\gamma_{max}$ changes in a certain range. 
So holistically, CSGCL is robust to the variation of $\gamma_{max}$, 
but it is recommended to use the most appropriate one for each dataset to achieve better performance.
The classification accuracy will decrease if $\gamma_{max}$ is too small ({\it e.g.} $< 1$ for Coauthor-CS),
undermining the benefit of communities;
however, it will also decrease if $\gamma_{max}$ is too large ({\it e.g.} $> 5$ for Coauthor-CS), 
derailing the model training. 

% \cite{gca} illustrated the effect of augmentation hyperparameters $p_a^{(1)}$, $p_a^{(2)}$, $p_e^{(1)}$, and $p_e^{(2)}$,
% showing the performance is relatively stable when they are not too large, 
% {\it e.g.}, $\le 0.7$.
% As the adaptivity of our augmentations is inspired by them, the conclusions above apply to CSGCL as well.

% for different graph datasets, the strength coefficient $\gamma_{max}$ varies to get the best results.

\section{Conclusion}

In this paper, we present CSGCL, 
a novel contrastive framework enhanced by community strength throughout the learning process.
Firstly, we manage to preserve differences among communities by the enhanced augmentations on attributes and edges, CAV and CED. 
Secondly, we put forward the dynamic Team-up contrastive scheme 
which regards GCL as a social group activity, 
guiding the optimization with community strength in a progressive manner.
CSGCL achieves state-of-the-art performance on three downstream tasks: node classification, node clustering, and link prediction,
indicating the effectiveness and generality of community-strength-enhanced representations.
% In the future, even though our community strength has well captured the differences among communities,
% we plan to study more precise definitions of community strength to further improve CSGCL.
% We will also investigate how community strength affects GCL on directed graphs.

\clearpage

\section*{Ethical Statement}

There are no ethical issues.

\section*{Acknowledgments}

This work is supported by National Natural Science Foundation of China under grants U1936108, U1836204, 62206102, 
and Science and Technology Support Program of Hubei Province under grant 2022BAA046.

%% The file named.bst is a bibliography style file for BibTeX 0.99c
\bibliographystyle{named}
\bibliography{ijcai23}

\clearpage

\appendix

\section{Algorithms of CSGCL}\label{A}

We summarize Communal Attribute Voting, Communal Edge Dropping and the whole CSGCL as Algorithm~\ref{alg:a}-\ref{alg} as follows.

\begin{algorithm}[htb]
    \small
    \caption{Communal Attribute Voting (CAV) algorithm.}
    \label{alg:a}
    \textbf{Input}: Attribute matrix $\mathbf{X}$, community indicator matrix $\mathbf{H}$, community strength vector $\mathcal{S}$\\
    \textbf{Parameter}: Sample probability $p_a$\\
    \textbf{Output}: Perturbed attribute matrix $\tilde{\mathbf{X}}$
    \begin{algorithmic}[1] %[1] enables line numbers
        \IF {all-zero dimension $\mathbf{X}_{:j}$ exists}
        \STATE {exclude $\mathbf{X}_{:j}$ from $\mathbf{X}$}
        \ENDIF
        \STATE Calculate normalized attribute penalty vector $\mathbf{w}_a$ by~\eqref{eq4}
        \STATE Sample attribute masks 
        $\mathbf{m}_a$ from $Bernoulli (1 - \mathbf{w}_a p_a)$
        \STATE $\tilde{\mathbf{X}} = \mathbf{m}_a \circ \mathbf{X}$
        \STATE \textbf{return} $\tilde{\mathbf{X}}$
    \end{algorithmic}
\end{algorithm}

\begin{algorithm}[htb]
    \small
    \caption{Communal Edge Dropping (CED) algorithm.}
    \label{alg:e}
    \textbf{Input}: Adjacency matrix $\mathbf{A}$, community indicator matrix $\mathbf{H}$, community strength vector $\mathcal{S}$\\
    \textbf{Parameter}: Sample probability $p_{e}$\\
    \textbf{Output}: Perturbed adjacency matrix $\tilde{\mathbf{A}}$
    \begin{algorithmic}[1] %[1] enables line numbers
        \FOR {$e \in \mathcal{E}$}
        \STATE Calculate edge weight $w(e)$ by~\eqref{eq9}
        \ENDFOR
        \STATE Calculate normalized edge weight vector $\mathbf{w}_{e} = \tilde{n}_e \left([w(e)]\right)$
        \STATE Sample edge masks 
        $\mathbf{m}_e$ from $Bernoulli (\mathbf{w}_e p_e)$
        \FOR {$(u, v) \in \mathcal{E}$}
        \STATE $\tilde{A}_{(u, v)} = m_{e, (u, v)} A_{(u, v)}$
        \ENDFOR
        \STATE \textbf{return} $\tilde{\mathbf{A}}$
    \end{algorithmic}
\end{algorithm}

\begin{algorithm}[!htb]
    \small
    \caption{CSGCL algorithm.}
    \label{alg}
    \textbf{Input}: Attribute graph $\mathcal{G} = (\mathbf{X}, \mathbf{A})$\\
    \textbf{Parameter}: Sample probabilities $p_a^{(1)}$, $p_a^{(2)}$, $p_e^{(1)}$, $p_e^{(2)}$, loss hyperparameters $t_0$, $\gamma_{max}$, $\tau$ \\
    \textbf{Output}: A trained encoder $f(\cdot;\Theta)$
    \begin{algorithmic}[1] %[1] enables line numbers
        \STATE Initialize $\Theta$
        \STATE Find communities $\mathbf{H}$ of $\mathcal{G}$
        \STATE Calculate community strength vector $\mathcal{S}$ by~\eqref{eq3}
        \FOR {epoch = $1,2,\cdots$}
        \STATE // 6-10: Generate two perturbed graph views
        \STATE $\tilde{\mathbf{X}}^{(1)} = \text{CAV}(\mathbf{X},\mathbf{H},\mathcal{S};p_a^{(1)})$
        \STATE $\tilde{\mathbf{A}}^{(1)} = \text{CED}(\mathbf{A},\mathbf{H},\mathcal{S};p_e^{(1)})$
        \STATE $\tilde{\mathbf{X}}^{(2)} = \text{CAV}(\mathbf{X},\mathbf{H},\mathcal{S};p_a^{(2)})$
        \STATE $\tilde{\mathbf{A}}^{(2)} = \text{CED}(\mathbf{A},\mathbf{H},\mathcal{S};p_e^{(2)})$
        \STATE $\tilde{\mathcal{G}}^{(1)} = (\tilde{\mathbf{X}}^{(1)}, \tilde{\mathbf{A}}^{(1)}), \tilde{\mathcal{G}}^{(2)} = (\tilde{\mathbf{X}}^{(2)}, \tilde{\mathbf{A}}^{(2)})$
        \STATE $\tilde{\mathbf{Z}}^{(1)} = f(\tilde{\mathcal{G}}^{(1)};\Theta), \tilde{\mathbf{Z}}^{(2)} = f(\tilde{\mathcal{G}}^{(2)};\Theta)$
        \STATE Calculate Team-up loss $\mathcal{L}(\mathbf{H}, \mathcal{S}, t_0, \gamma_{max}, \tau)$ by~\eqref{eq12}\eqref{eq14}\eqref{eq15}
        \STATE Update $\Theta$ by gradient descent on $\mathcal{L}$
        \ENDFOR
    \end{algorithmic}
\end{algorithm}

\section{Proof of Theorem~\ref{theorem1}}\label{B}

Suppose the community indicator matrix $\mathbf{H}=[H_{i,k}]_{n \times |\mathcal{C}|}$ 
where $H_{i,k} = 1$ iff the $i$th node belongs to the community $k$, else $H_{i,k} = 0$. 
Then we have
\begin{linenomath}\label{eq16}
\begin{align}\nonumber
(\mathbf{H}\mathbf{H}^\top \circ \mathbf{A})_{i,j} & = A_{i,j}(\mathbf{H}\mathbf{H}^\top)_{i,j} = A_{i,j}\mathbf{H}_{i:}\mathbf{H}_{j:}^\top \\
& = A_{i,j}\sum_{k=1}^{|\mathcal{C}|}{H_{i,k}H_{j,k}}
\end{align}
\end{linenomath}
Consider an intra-strong-community edge $e_{strong} = (u_i,u_j)$, and an intra-weak-community edge $e_{weak} = (u_k,u_l)$, 
then we have $H_{i,strong} = H_{j,strong} = H_{k,weak} = H_{l,weak} = 1$, $A_{i,j} = A_{k,l} = 1$. 
Therefore, 
\begin{linenomath}\label{eq17}
\begin{gather}
    (\mathbf{H}\mathbf{H}^\top \circ \mathbf{A})_{i,j} = A_{i,j}H_{i,strong}H_{j,strong} = A_{i,j} = 1 \\
    \label{eq18}
    (\mathbf{H}\mathbf{H}^\top \circ \mathbf{A})_{k,l} = A_{k,l}H_{k,weak}H_{l,weak} = A_{k,l} = 1
\end{gather}
\end{linenomath}
So according to~\eqref{eq9}:
\begin{equation}\label{eq19}
    \resizebox{0.88\columnwidth}{!}{$
    w(e_{strong})=\mathbf{H}_{i:}\mathcal{S}=\mathcal{S}_{strong}, \ 
    w(e_{weak})=\mathbf{H}_{j:}\mathcal{S}=\mathcal{S}_{weak}$
    }
\end{equation}
Since $\mathcal{S}_{strong} > \mathcal{S}_{weak}$, $w(e_{strong}) > w(e_{weak})$.

Consider an inter-community edge $e_{inter} = (u_m,u_n)$,
where $u_m$ and $u_n$ are not members of the same community:
$\nexists c \in \mathcal{C} \ s.t. \ H_{m,c}=H_{n,c}=1$.
Therefore, 
\begin{equation}\label{eq20}
    (\mathbf{H}\mathbf{H}^\top \circ \mathbf{A})_{m,n} = A_{m,n}\underbrace{\sum_{k=1}^{|\mathcal{C}|}{H_{m,k}H_{n,k}}}_0 = 0
\end{equation}
Similarly, according to~\eqref{eq9}:
\begin{equation}\label{eq21}
    w(e_{inter}) = -(\mathbf{H}_{m:}\mathcal{S} + \mathbf{H}_{n:}\mathcal{S}) 
    < \min_{c \in \mathcal{C}}{\mathcal{S}_c}
    < \mathcal{S}_{weak}.
\end{equation}
We know from~\eqref{eq19} that $\mathcal{S}_{weak} = w(e_{weak}) > w(e_{inter})$, so we finally have $w(e_{strong}) > w(e_{weak}) > w(e_{inter})$.
As the normalization operation $\tilde{n}_e(\cdot)$ is linear with $x_{mean} > x_{min}$, 
it will not change our conclusion. 
Thus, $\tilde{n}_e \left(w(e)\right)$ defined in \eqref{eq8}-\eqref{eq9} satisfies the condition of~\eqref{eq7}.
\qed

\section{Experiment Details}\label{C}

\subsection{Datasets}\label{C.1}
% For the main experiments, 
We use Wiki-CS, Amazon-Computers (Computers), Amazon-Photo (Photo), and Coauthor-CS to conduct our experiments. 
The detailed statistics of all the used datasets are in Table~\ref{data}.

\begin{table}[hbt]
\centering
\small
\resizebox{1\columnwidth}{!}{
\tabcolsep=0.1cm
\begin{tabular}{llrrrr} \toprule 
Dataset & Type & Nodes & Edges & Attributes & Classes  \\ \midrule
Wiki-CS & reference & 11,701 & 216,123 & 300 & 10  \\ 
Photo & co-purchase & 7,487 & 119,043 & 745 & 8\\ 
Computers & co-purchase & 13,381 & 245,778 & 767 & 10\\ 
Coauthor-CS & co-author & 18,333 & 81,894 &	6,805 & 15\\
\bottomrule
\end{tabular}
}
\caption{Statistics of data.}
\label{data}
\end{table}

\begin{table*}[htb]
\centering
\small
\begin{tabular}{lccccccccccc} \toprule 
Dataset & \makecell{Training \\ epochs} & $p_a^{(1)}$ & $p_a^{(2)}$ & $p_e^{(1)}$ & $p_e^{(2)}$ & $t_0$ & $\gamma_{max}$ & $\tau$ & \makecell{Learning \\ rate} & \makecell{Hidden \\ dimension} & \makecell{Activation \\ function} \\ \midrule
Wiki-CS & 2000 & 0.1 & 0.2 & 0.2 & 0.7 & 10 & 10 & 0.6 & 1e-2 & 256 & PReLU \\ 
Computers & 2500 & 0.1 & 0.1 & 0.6 & 0.7 & 20 & 5 & 0.2 & 1e-1 & 512 & PReLU \\ 
Photo & 3000 & 0.1 & 0.1 & 0.5 & 0.7 & 20 & 1 & 0.3& 1e-1 & 256 & ReLU  \\ 
Coauthor-CS & 4600 & 0.1 & 0.5 & 0.1 & 0.1 & 15 & 1 & 0.5 & 5e-4 & 256 & RReLU \\
\bottomrule
\end{tabular}
% }
\caption{Detailed hyperparameters.}
\label{param}
\end{table*}

\begin{table*}[htb]
\centering
% \captionsetup{justification=centering}
\resizebox{1.55\columnwidth}{!}{
\begin{tabular}{llccccc} \toprule
Metric & Method & Level & Wiki-CS & Computers & Photo & Coauthor-CS  \\ \midrule
\multirow{5}{*}{Micro-F1} 
& Raw (LogReg) & -- & 71.86±0.00 & 73.19±0.00 & 79.02±0.00 & 89.64±0.00 \\
& GRACE & node & 78.00±0.17 & 88.28±0.15 & 92.62±0.14 & 92.50±0.01\\
& GCA-best & node & 78.19±0.05 & 87.99±0.15 & 92.41±0.20 & 92.92±0.06\\
& gCooL-best & community & \textbf{78.71±0.05} & 88.67±0.10 & 92.93±0.24 & 92.75±0.01\\
& CSGCL & community & 78.56±0.15 & \textbf{90.17±0.18} & \textbf{93.28±0.24} & \textbf{93.52±0.07}\\ \midrule\midrule
\multirow{5}{*}{Macro-F1} 
& Raw (LogReg) & -- & 67.98±0.00 & 55.51±0.00 & 74.07±0.00 & 85.13±0.00\\
& GRACE & node & 74.99±0.25 & 86.32±0.27 &91.39±0.36 &89.89±0.01\\
& GCA-best & node & 75.30±0.08 & 85.93±0.31 & 91.04±0.30 & 90.76±0.16\\
& gCooL-best & community & 75.89±0.09 & 87.38±0.14 & 91.81±0.26 & 90.87±0.03\\
& CSGCL & community & \textbf{75.90±0.12} & \textbf{89.02±0.20} & \textbf{91.96±0.15} & \textbf{91.84±0.11}\\ \bottomrule
\end{tabular}
}
\caption{Comparative results of node classification measured by F1 scores ($\%$).
}
\label{cls-f1}
\end{table*}

\begin{itemize}  
\item  {\bf Wiki-CS}~\cite{wikics} 
is a directed graph dataset obtained from Wikipedia, 
composed of Computer Science articles and hyperlinks between articles. 
There are 10 types of nodes, representing articles in different fields. 
Node attributes are calculated as the average value of the text embedding of corresponding articles. 
It is the only dataset with dense attributes. 
\item  {\bf Amazon-Computers} and {\bf Amazon-Photo}~\cite{amazon+coauthor} 
are two networks representing the co-purchase relations of goods. 
Edges indicate that two goods are purchased together more frequently. 
Node attributes are bag-of-words encoding product reviews. 
Classes are given by the categories of products. 
\item  {\bf Coauthor-CS}~\cite{amazon+coauthor} 
is a Microsoft academic network based on the 2016 KDD Cup Challenge. 
Each node represents an author of a paper, and its attributes represent the paper keywords. 
Two linked authors have collaborated on a single paper.
Each category tag indicates the most active research direction of each author. 
\end{itemize}

\begin{figure}
  \centering
  \includegraphics[scale=0.48]{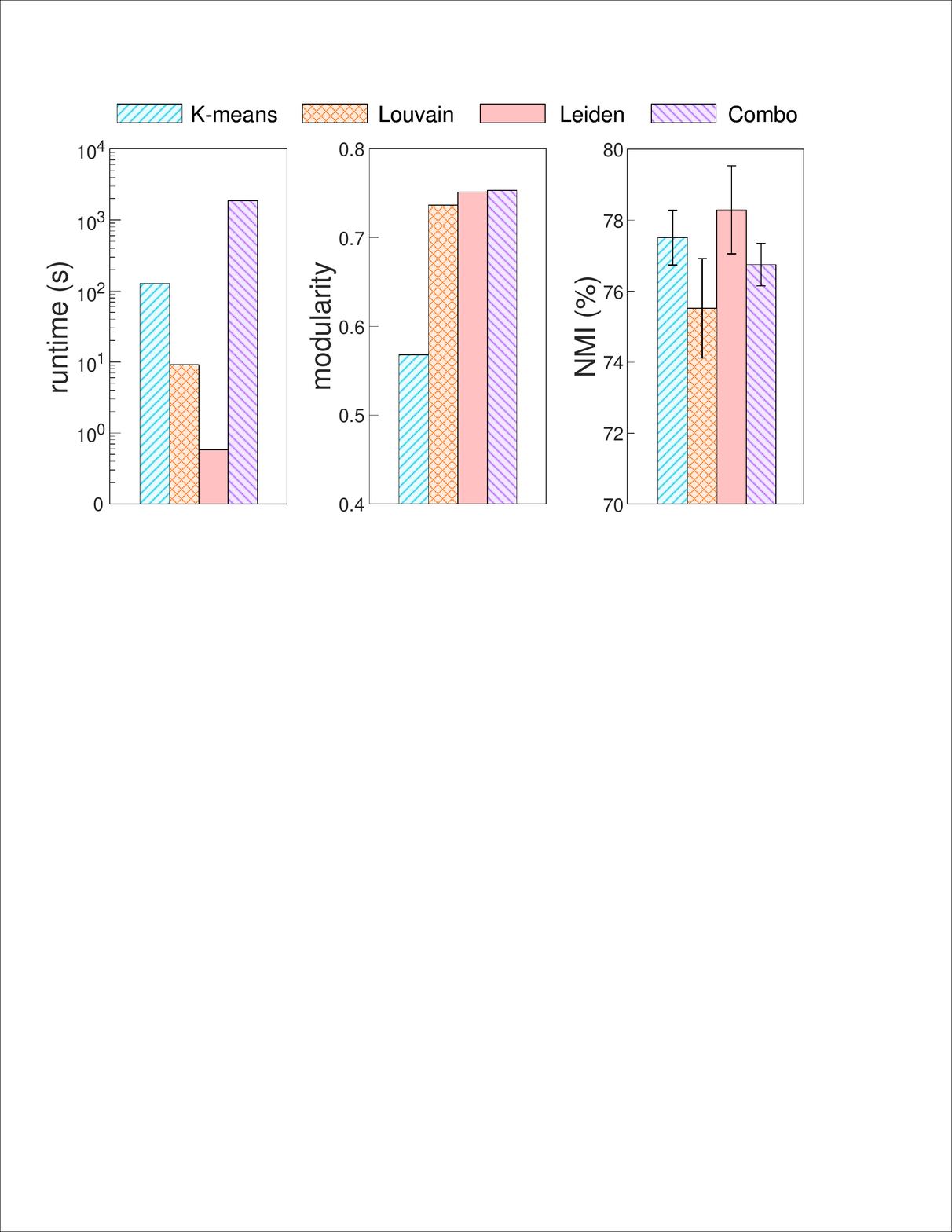}
  \caption{
  Performance comparison of different community detection methods on Coauthor-CS,
  including the average runtime of each algorithm (left),
  the average modularity of each partition (center),
  and performance of CSGCL on the node clustering task when each algorithm is employed as the community detector (right).
  }
  \label{fig-cd}
\end{figure}

\subsection{Environment Configurations}\label{C.2}
All experiments are conducted on an NVIDIA GeForce GTX 1080 Ti GPU (with 11GB memory).
The model is built using PyTorch 1.8.1~\cite{pytorch} and PyTorch Geometric 2.0.1~\cite{pyg},
the latter is also the source of all datasets. 
The community detection algorithm is provided by CDLib 0.2.6.\footnote{
\url{https://github.com/GiulioRossetti/cdlib}
}
We adopt grid search for the appropriate hyperparameters, following~\cite{gca}. 
The detailed hyperparameters for node classification are shown in Table~\ref{param}.

\section{Related Work of Community Detection}\label{D}

Community detection methods identify underlying communities from a large-scale network. 
Traditional unsupervised heuristic methods include graph partitioning~\cite{girvan-newman},
hierarchical agglomeration~\cite{hierarchy}, 
spectral clustering~\cite{spectral}, 
density-peak clustering~\cite{lccd}, 
and modularity optimization~\cite{mod}. 
Louvain~\cite{louvain} is one of the most popular optimization method which divides each iteration into two steps: 
(1) introduce a node perturbation into each community, \textit{i.e.} relocate a node to an adjacent community; 
(2) evaluate the gain of modularity as a result of the perturbation to decide if it will be retained. 
Leiden~\cite{leiden} further improves Louvain by refining the partition to get rid of disconnected subgraphs. 
Combo~\cite{combo} is another accurate optimization method 
that shifts every node from its source community to a better redistributed destination community for each iteration. 
However, Combo suffers from large time complexity.
Deep learning based community detection methods use graph convolutional networks (GCNs)~\cite{sgcn,clare}, 
generative adversarial networks (GANs)~\cite{seal}, 
or graph autoencoders (GAEs)~\cite{graphencoder} to detect communities under more complicated circumstances. 
For our model, we utilize unsupervised community detection above to precisely mine the community information, 
in order to refocus attention on the performance of communities in the graph representations. 

\section{Additional Experiments}\label{E}

\subsection{Performance of Different Community Detection Methods}\label{E.1}

CSGCL is compatible with multiple non-overlapping unsupervised community detection methods.
Different to the methods that need to specify the number of communities beforehand, 
CSGCL prefers other methods that do not need to do this, 
because they do not rely on extra prior knowledge of the training data,
thus they are suitable for new graphs without explicit community information.

To choose an appropriate community detection method, 
we compare three available candidates: 
{\bf Louvain}~\cite{louvain}, {\bf Leiden}~\cite{leiden}, and {\bf Combo}~\cite{combo}.
As a supplement, {\bf K-means}~\cite{kmeans}, a community number fixed method, 
is also tested with the number of communities $K$ set as the actual number of classes (15 for Coauthor-CS), as recommended by~\cite{gcool}.

We obtained the results of average runtime, modularity of community partition, 
and final performance of node clustering on Coauthor-CS, as shown in Figure~\ref{fig-cd}. 
We analyze the observations and draw the following conclusions from the results:
(1) Our model w/ K-means achieves the next-best performance in the node clustering task, 
but it is discarded because of its higher demand for data.
(2) Combo is the best community partition method in our experiment,
but it is dismissed because of its prohibitive time overhead on large-scale networks,
being orders of magnitude higher than other algorithms.
(3) Leiden is the fastest algorithm (with an average runtime of under 10 seconds) with the best performance on the node clustering task.

To sum up, Leiden is chosen as the community detector in all of our experiments.
More importantly, it is likely that our CSGCL will achieve more impressive performance 
with better community detection methods in the future. 

\begin{table}
\centering
\small
\resizebox{0.9\columnwidth}{!}{
\tabcolsep=0.1cm
\begin{tabular}{lccc} \toprule 
Method & Computers & Photo & Coauthor-CS  \\ \midrule
Raw & 85.42±0.00 & 83.14±0.00 & 86.59±0.00\\
GRACE & 92.15±0.43 & 83.85±4.15 & 94.87±0.02\\
GCA-best & 90.50±0.63 & 86.53±3.00 & 94.94±1.37\\
gCooL-best & 86.06±2.33 & 84.01±2.50 & 88.24±0.05\\
CSGCL (ours) & \textbf{94.86±1.04} & \textbf{89.45±1.74} & \textbf{96.00±0.14}\\ \midrule\midrule
GCN & 87.01±1.26 & 85.87±1.02 & 89.02±1.67\\
GAT & 86.00±4.75 & 89.18±2.71 & 90.90±1.56\\ \bottomrule
\end{tabular}
}
\caption{Link prediction measured by AP (\%).}
\label{link-ap}
\end{table}

\subsection{Performance by Different Metrics}\label{E.2}

We also tried micro-F1 and macro-F1, two metrics for multi-class classification, 
to better evaluate the performance of node classification. 
The results are presented in Table~\ref{cls-f1}.
{\bf Bolded} results represent the best results for each column.
Table~\ref{cls-f1} indicates that CSGCL is still dominant over node-level GCL methods
and outperforms gCooL on Computers, Photo, and Coauthor-CS
in both micro-F1 and macro-F1. 

Additionally, we use average precision (AP) as an additional metric of link prediction
to compensate for the biases of AUC.
Table~\ref{link-ap} shows that CSGCL is still dominant over the baselines in AP.

\end{document}